\crefname{equation}{}{} %
\crefname{section}{Sec.}{Sec.}
\newcommand{\domf}{\mathcal{F}}
\newcommand{\ab}{($a, b$)}    %
\def\BState{\State\hskip-\ALG@thistlm}
\algrenewcommand\ALG@beginalgorithmic{\footnotesize}
\newcommand{\transpose}{\text{T}}
\newcommand{\transp}{\transpose}
\newcommand{\dimx}{{n_\text{x}}}
\newcommand{\dimu}{{n_\text{u}}}
\newcommand{\nfeat}{m}     %
\newcommand{\alow}{a_\text{min}}
\newcommand{\aup}{a_\text{max}}
\newcommand{\blow}{b_\text{min}}
\newcommand{\bup}{b_\text{max}}
\newcommand{\kSE}{k_{\text{SE}}}
\newcommand{\kLQR}{k_{\text{pLQR}}}
\newcommand{\kLQRmulti}{k_{\text{pLQR},\nfeat}}
\newcommand{\kLQRtwo}{k_{\text{nLQR}}}
\newcommand{\GP}{\mathcal{GP}}
\newcommand{\setstable}{\mathcal{F}}    %
\providecommand{\abs}[1]{\lvert#1\rvert}
\newcommand{\Dc}{\mathcal{D}}
\newcommand{\Nc}{\mathcal{N}}
\newcommand{\field}[1]{\mathbb{#1}}
\newcommand{\R}{\field{R}}
\DeclareMathOperator*{\E}{\field{E}}
\DeclareMathOperator*{\V}{\field{V}}
\newtheorem{definition}{Definition}
\newtheorem{proposition}{Proposition}
\newtheorem{fact}{Fact}
\newtheorem{example}{Example}
\newcommand{\ie}{i\/.\/e\/.,\/~}
\newcommand{\eg}{e\/.\/g\/.,\/~}
\newcommand{\fig}{Fig\/.\/~}
\newcommand{\sect}{Sec\/.\/~}
\newcommand{\The}{Theorem~}
\newcommand{\Ex}{Example~}
\newcommand{\etal}{\emph{et al}\/.\/~}
\title{\LARGE \bf
On the Design of LQR Kernels for Efficient Controller Learning
}
\author{Alonso Marco$^{1}$, Philipp Hennig$^{1}$, Stefan Schaal$^{1,2}$ and Sebastian Trimpe$^{1}$%
\thanks{$^{1}$Max Planck Institute for Intelligent Systems, 
	72076 T\"ubingen, Germany. \{amarco, phennig, sschaal, strimpe\}@tuebingen.mpg.de}%
\thanks{$^{2}$Computational Learning and Motor Control Lab, University of Southern California, Los Angeles, USA.}%
\thanks{This work was supported in part by the Max Planck Society, the Max Planck ETH Center for Learning Systems, National Science Foundation grants IIS-1205249, IIS-1017134, EECS-0926052, the Office of Naval Research, and the Okawa Foundation.}%
}
\newcommand{\mytitle}{\textbf{Accepted final version.}
To appear in \textit{56th IEEE Conference on Decision and Control}.\\
\copyright 2017 IEEE. Personal use of this material is permitted. Permission from IEEE must be obtained for all other uses, in any current or future media, including reprinting/republishing this material for advertising or promotional purposes, creating new collective works, for resale or redistribution to servers or lists, or reuse of any copyrighted component of this work in other works.}
\begin{document}

\maketitle

\thispagestyle{fancy}	
\pagestyle{empty}

%
%
\begin{abstract}
Finding optimal feedback controllers for 
nonlinear dynamic systems from data is hard. 
Recently, Bayesian optimization (BO) has been proposed as a powerful framework for direct controller 
tuning from experimental trials.   
For selecting the next query point and finding the global optimum,
BO relies on a probabilistic description of the latent objective function, typically a Gaussian process (GP).
As is shown herein, GPs with a common kernel choice can, however, lead to poor learning outcomes on standard quadratic control problems.
For a first-order system, we construct two kernels that specifically leverage the structure of the well-known Linear Quadratic Regulator (LQR), yet retain the flexibility of Bayesian nonparametric learning.  Simulations of uncertain linear and nonlinear systems demonstrate that the LQR kernels yield superior 
learning performance.
\end{abstract}

\section{Introduction}
\label{sec:intro}
A core problem of learning control is to determine optimal feedback controllers for (partially) unknown nonlinear systems from experimental data.  
Reinforcement learning (RL) \cite{SuBa98,KoBaPe13} is a promising framework for this, yet often requires performing many experiments on the physical system to even find suitable controllers, which limits the applicability of such techniques.  Therefore, a lot of research effort has been invested into data efficiency of RL aiming at learning controllers from as few experiments as possible. 
Recently,  Bayesian optimization (BO) has been proposed for RL as a promising approach in this direction.
BO employs a probabilistic description of the latent objective function (typically a Gaussian process (GP)), which allows for selecting next control experiments in a principled manner, \eg to maximize information gain \cite{MaHeBoScTr16} or perform safe exploration \cite{BeScKr16}.

While BO provides a promising framework for learning controllers in fairly general settings, the full power of Bayesian learning is often not exploited.
A key advantage of Bayesian methods is that they allow for combining prior problem knowledge with learning from data in a principled manner. In case of GP models, this concerns specifically the choice of the \emph{kernel}, 
which captures the covariance between function values at different inputs and is thus the core component to model prior knowledge about the function shape.
By choosing standard kernels, however, naive BO approaches do often not exploit this opportunity to improve learning performance. 

In this paper, we show how structural knowledge about the optimal control problem at hand can be leveraged for designing specific kernels in order to improve data efficiency in learning control.
For a first-order nonlinear quadratic optimal control problem, we propose two \emph{LQR kernels} that leverage the structure of the famous Linear Quadratic Regulator (LQR) problem given in form of an \emph{approximate linear model} of the \emph{true nonlinear dynamics}.  The proposed kernels 
leverage the structure of the LQR problem, while retaining the flexibility of nonparametric GP regression.
\subsubsection*{Contributions}
In detail, this paper makes the following contributions:
\begin{enumerate}[topsep=0pt]
\item We discuss how the structure of the well-known LQR problem can be leveraged for efficient learning of controllers for nonlinear systems.
\item This discussion leads to the proposal of two new kernels for GP regression in the context of learning control: the \emph{parametric} and \emph{nonparametric LQR kernels}.
\item The improved learning performance achieved with these kernels over a standard kernel is demonstrated through numerical simulations.
\end{enumerate}

\subsubsection*{Related work}
BO for learning controllers has recently been considered, for example in \cite{MaHeBoScTr16, MaBeHeScKrScTr17, BeScKr16, ScNgEbBiMaTo15, CaSePeDe15}, in different flavors.  While \cite{MaHeBoScTr16} and \cite{MaBeHeScKrScTr17} focus on data efficiency by maximizing the information gain in each iteration, \cite{BeScKr16} and \cite{ScNgEbBiMaTo15} propose methods for safe exploration. Different BO algorithms are compared in \cite{CaSePeDe15}.
These works 
all consider tuning of state-feedback controllers with a quadratic cost (possibly saturated) similar to the setting herein, but using standard kernels.
The design of customized kernels for GP regression has been considered before in the context of control and robotics for related problems.
A kernel for bipedal locomotion, which captures typical gait characteristics, is proposed in \cite{AnRaAt16}.  In \cite{MeEnHi16}, an impedance-based model is incorporated as prior knowledge 
for improved predictions in human-robot interaction.  For the problem of maximizing power generation in photovoltaic plants, the authors in \cite{AbBePoKr16} incorporate explicit basis functions about known power curves in the kernel.
In \cite{MaBeHeScKrScTr17}, a kernel is designed to model information from simulation and physical experiments, in order to leverage both sources of information for RL.

None of the above references considers the problem of incorporating the structure of the LQR problem
for improving data-efficiency in learning control with BO.

\subsubsection*{Outline}
We introduce the considered learning control problem in \sect \ref{sec:problem}, and summarize BO for RL in \sect \ref{sec:bayesianOpt}, together with the necessary background on GPs.  While the BO framework for learning control is introduced for general multivariate systems, we focus on the special case of a scalar problem thereafter and develop the LQR kernels in \sect \ref{sec:lqrkernel}.  Numerical results in \sect \ref{sec:simulations} illustrate the improved learning performance of the proposed kernels over a standard kernel.  The paper concludes with remarks in \sect \ref{sec:conclusion}.

\subsubsection*{Notation}
A Gaussian random variable $x$ with mean $m$ and variance $V$ is denoted by $x \sim \Nc(m,V)$.
The expected value of $x$ is denoted by $\E[x]$, while $\V[x_1, x_2]$ denotes the covariance of $x_1$ and $x_2$.  We also use $\V[x_1] := \V[x_1, x_1]$.

\section{Learning Control Problem} 	
\label{sec:problem}
We consider regulation of the nonlinear stochastic system
\begin{align}
x_{t+1} &= g(x_t, u_t, v_t)
\label{eq:sys}
\end{align}
with state $x_t \in \R^\dimx$, control input $u_t \in \R^\dimu$, and random noise $v_t \sim \Nc(0,V)$.
We assume that the state $x_t$ can be measured or otherwise estimated.
The system dynamics $g(\cdot)$ are unknown.  
The control objective is to find a state-feedback controller 
\begin{equation}
u_t = F x_t
\label{eq:stateFeedbackControl}
\end{equation}
with controller gain $F \in \R^{\dimu \times \dimx}$ such that the quadratic cost
\begin{equation}
J = \lim_{T \to \infty} \frac{1}{T} \E\Big[ \sum_{t=0}^{T-1} x_t^\transp Q x_t + u_t^\transp R u_t \Big]
\label{eq:cost}
\end{equation}
with symmetric weights $Q \geq 0$ and $R > 0$ is minimized.  A quadratic cost is a very common choice in optimal control \cite{AnMo07} expressing the designer's preference in the fundamental trade-off between control performance and control effort.

While the solution of the above problem is standard when the dynamics \eqref{eq:sys} are known and linear \cite{AnMo07}, here, we face the problem of optimal control under \emph{unknown} and \emph{nonlinear} dynamics.

To address this problem, we follow a direct RL approach \cite{ScAt10}, where the optimal controller is learned by directly evaluating the performance of candidate controllers $F_i$ on the real system \eqref{eq:sys} without learning an (intermediate) dynamics model.  
We employ Bayesian optimization (BO) as a popular approach for sequential global optimization, which leverages the information from previous trials in order to propose the next candidate controller $F_{i+1}$
so as to eventually find the optimum of \eqref{eq:cost}.
Because evaluations on the system are expensive, we seek to find the optimal controller with as few evaluations as possible.  To this end, we address herein how to leverage information from a linear model that approximates the true system \eqref{eq:sys}.
\section{Bayesian Optimization for Learning Control} 	
\label{sec:bayesianOpt}
We briefly introduce necessary background material on Gaussian processes (GPs) and Bayesian optimization (BO), before describing reinforcement learning with BO.

\subsection{Gaussian process regression}
\label{sec:GPR}
Let $\theta$ denote the free parameters of a feedback controller; that is, $\theta = \text{vec}(F)$ in \eqref{eq:stateFeedbackControl}, or any other parameterization.  The dependence of the cost function $J$ \eqref{eq:cost} on $\theta$ is unknown \emph{a priori} because of the lack of knowledge of the system \eqref{eq:sys}.  We use GP regression to approximate the function $J: \Theta \to \R$, $\theta \in \Theta$, from data (\ie noisy function evaluations)
\begin{equation}
J_i = J(\theta_i) + \varepsilon_i, \quad \varepsilon_i \sim \Nc(0,\sigma^2).
\label{eq:J_evaluation}
\end{equation}
Obtaining one data point $J_i$ corresponds to performing a closed-loop experiment on the true system \eqref{eq:sys} with controller $\theta_i$, recording the state-input trajectory, and computing the cost from these data (in case of \eqref{eq:cost} a finite approximation is used with a sufficiently long horizon $T$).

A GP can be defined as a probability distribution over the space of functions $J: \Theta \to \R$ whose restriction to any finite number of function values is jointly Gaussian \cite[p.~13]{RaWi06}.  
A GP is specified through its prior mean function $\mu: \Theta \to \R$ and covariance function $k: \Theta \times \Theta \to \R$. We write
\begin{align}
J(\theta) &\sim \GP(\mu(\theta), k(\theta, \theta^\prime) \quad \text{with} \nonumber \\
\mu(\theta) &= \E[J(\theta)] \nonumber \\
k(\theta,\theta^\prime) &= \V[J(\theta), J(\theta^\prime)] . \nonumber 
\end{align}
Hence, $\mu$ is the expected function value, and $k$ captures the covariance between any two function values and is used to model uncertainty.  The latter is also 
called \emph{kernel} and must satisfy the following property to give rise to a valid covariance function:
\begin{definition}[according to \cite{RaWi06}]
Let $k: \Theta \times \Theta \to \R$ be a function, and 
$K_N \in \R^{N \times N}$ be the symmetric matrix whose entries are computed as $K_{ij} = k(\theta_i, \theta_j)$ from the collection $\{ \theta_1, \theta_2, \dots, \theta_N\}$, $\theta_i \in \Theta$.  The function $k$ is called a \emph{positive semidefinite kernel} if $K$ is positive semidefinite for any finite collection $\{ \theta_1, \theta_2, \dots, \theta_N\}$.
\end{definition}
The matrix $K_N$ is called the \emph{Gram matrix}.

A kernel typically has its own parameters, called \emph{hyperparameters}, such as length scales or output variance. By choosing the prior mean and the kernel, the user can specify prior knowledge about the function such as expected shape, length scales, and smoothness properties.

In GP regression, learning a function amounts to predicting the (normal) distribution of function values $J(\theta^*)$ at arbitrary inputs $\theta^* \in \Theta$ based on previous evaluations.  Given $N$ data points $\Dc_N = \{\theta_i, J_i\}_{i=1}^N$, the posterior mean and variance of $J(\theta^*)$ can be stated in closed form as
\begin{align}
\E[J(\theta^*) | \Dc_N] & = \mu(\theta^*) - k_N^\transp(\theta^*) (K_N \!+\! \sigma^2 I)^{-1} \hat{J}_N \label{eq:GP_post_mean} \\
\V[J(\theta^*) | \Dc_N] & = k(\theta^*\!,\theta^*) \nonumber \\
 												& - k_N^\transp(\theta^*) (K_N \!+\! \sigma^2 I)^{-1} k_N(\theta^*) \label{eq:GP_post_var}
\end{align}
where $k_N(\theta^*) := [k(\theta^*\!,\theta_1), \dots, k(\theta^*\!,\theta_N)]^\transp$ and 
$\hat{J}_N := [J_1 - \mu(\theta_1), \dots, J_N - \mu(\theta_N)]^\transp$.

In addition to computing the posterior distribution, the data $\Dc_N$ can also be used to optimize hyperparameters in order to improve the model fit.  A popular approach is maximizing the marginal likelihood of the data, $p(\hat{J}_N | \theta_1, \dots, \theta_N)$, which is given in logarithmic form as \cite[p.~19]{RaWi06}
\begin{equation}
-\tfrac{1}{2}\hat{J}_N^\transp (K_N \!+\! \sigma^2 I)^{-1} \hat{J}_N 
-\tfrac{1}{2} \abs{K_N \!+\! \sigma^2 I}
-\tfrac{N}{2} \log 2\pi .
\label{eq:GP_marginalLikeli}
\end{equation}

\subsection{Bayesian optimization}
\label{sec:BO}
Bayesian optimization \cite{Mo89} denotes a class of global optimization methods, where a probabilistic description $p(J(\theta))$ of the latent objective function $J$ is used for data-efficient optimization. 
Here, the probabilistic description is given by the GP $J \sim \GP(\mu,k)$.  Given $p(J(\theta))$, 
a \emph{utility} $U[p(J(\theta))]$ is defined, which is maximized in each iteration to find the next evaluation point
\begin{equation}
\theta_{i+1} = \arg \max_\theta U[p(J(\theta))] .
\label{eq:BO_nextEval}
\end{equation}
Different BO algorithms primarily distinguish in how the selection of the next evaluation \eqref{eq:BO_nextEval} is formulated.
Popular BO algorithms include \emph{expected improvement} (EI) \cite{jones1998efficient}, \emph{probability of improvement} (PI) \cite{Ku64},
\emph{GP upper confidence bound} (GP-UCB) \cite{srinivas2009gaussian}, and \emph{entropy search} (ES) \cite{HeSc12}.

The method developed herein is agnostic to the type of BO algorithm used.  For the examples in \sect \ref{sec:simulations}, we employ EI, which uses 
\begin{equation}
U[p(J(\theta))] = \E[ \min \{0, J_\text{low} - J(\theta) \}]
\label{eq:EI}
\end{equation}
for the optimization in \eqref{eq:BO_nextEval}, where $J_\text{low}$ represents 
the lowest function from current data set $\Dc_N$. EI thus selects the next query point where the expected improvement upon $J_\text{low}$ (under the GP of $J$) is maximal.

\subsection{Reinforcement learning with Bayesian optimization}
Algorithm~\ref{alg:RLwBO} summarizes direct RL using Bayesian optimization. 
\begin{algorithm}[t!]
\caption{Learning control with Bayesian optimization
}
\begin{algorithmic}[1]
\State Specify objective function $J$ (\eg \eqref{eq:cost} with finite horizon $T$)
\State Specify GP prior (mean $\mu$, kernel $k$)
\State Initialize: controller $\theta_0$; data set $\Dc_0 = \emptyset$
\While{(not terminated)} 
\Comment \eg fixed number of experiments
\State perform closed-loop experiment with controller $\theta_i$
\State compute $J_i$ from experimental data $\{x_t, u_t\}_{t=0}^{T-1}$
\State add evaluation $\{\theta_i, J_i\}$ to data set $\Dc_i$
\State update GP posterior \eqref{eq:GP_post_mean}, \eqref{eq:GP_post_var}
\State [optional:] optimize hyperparameters (\eg maximize \eqref{eq:GP_marginalLikeli})
\State compute next controller $\theta_{i+1}$ via \eqref{eq:BO_nextEval}
\EndWhile
\State determine `best guess' $\theta_\text{bg}$ for the controller parameters, \eg minimum of posterior mean \eqref{eq:GP_post_mean}
\State \Return $\theta_\text{bg}$
\end{algorithmic}
\label{alg:RLwBO}
\end{algorithm}
This framework 
has been successfully applied in experimental settings to learn feedback controllers for the control problem of \sect \ref{sec:problem} and related scenarios.  Berkenkamp \etal \cite{BeScKr16} optimize a state-feedback controller \eqref{eq:stateFeedbackControl} for 
quadrotor trajectory tracking using a safe BO algorithm \cite{SuGoBuKr15}, which builds on GP-UCB.  Marco \etal \cite{MaHeBoScTr16} propose to parametrize the feedback gain $F$ as LQR policy \cite{TrMiDoDAn14}, whose optimal weights are learned using ES, which maximizes the information gain from each experiment.  This algorithm 
has been extended in \cite{MaBeHeScKrScTr17} to include information from different sources (such as simulation and physical experiments).  Both methods \cite{MaHeBoScTr16,MaBeHeScKrScTr17} were successfully applied to learn pole balancing controllers.  Calandra \etal \cite{CaSePeDe15} compare PI, EI, and GP-UCB for learning the parameters of a discrete event controller for a walking robot.

\section{Constructing LQR Kernels} 	
\label{sec:lqrkernel}
In this section, we present two kernels that are specifically designed for the control problem of \sect \ref{sec:problem} and incorporate approximate model knowledge in form of a linear approximation to \eqref{eq:sys}.  
Because for a linear system, the solution to the optimal control problem of \sect \ref{sec:problem} is the well-known LQR, 
we term these kernels \emph{LQR kernels}.  
The following derivations are presented for a first-order system \eqref{eq:sys}, where all variables are scalars ($\dimx = \dimu = 1$).  Small letters are used in place of the capital ones to emphasize scalar variables (\eg $v$ instead of $V$ in~\cref{eq:sys} and $f$ instead of $F$ in \eqref{eq:stateFeedbackControl}).  We consider minimization of the cost \eqref{eq:cost}, which is rewritten as
\begin{equation}
J = \lim_{T \to \infty} \frac{1}{T} \E\Big[ \sum_{t=0}^{T-1} qx_t^2  + ru_t^2 \Big].
\label{eq:cost_scalar}
\end{equation}

We start by considering a learning example with a standard kernel choice for the GP, which motivates why a specifically designed kernel can be desirable. 

\subsection{Problems with standard kernel}

The most common kernel choice in GP regression is arguably the squared exponential (SE) kernel \cite{RaWi06}
\begin{equation}
\kSE(\theta, \theta^\prime) = \sigma_\text{SE}^2 \exp\!\Big(-\frac{(\theta-\theta^\prime)^2}{2 \ell^2} \Big)
\label{eq:kernel_SE}
\end{equation}
with signal variance $\sigma_\text{SE}^2$ and length-scale $\ell$ as its hyperparameters.
Let us consider the problem of learning the cost function $J$ \eqref{eq:cost_scalar} via GP regression with SE kernel for the following linear example:
\begin{example} %
\label{ex:scalarLinear1}
Let \eqref{eq:sys} be given by $x_{t+1} = 0.9 x_t + u_t + v_t$ with $v_t \sim \Nc(0,1)$.
\end{example}

Figure \ref{fig:GPfit_SE} shows the prior GP and the posterior GP after obtaining four data points (\ie after four evaluations of controllers $f_i$).
A few issues are apparent from the posterior: (i) the kernel has problems with the different length scales of the function ($J$ is steep around $f=-0.1$, but rather flat in the center region); (ii) the GP does not generalize well to regions where no data has been seen (\eg around $f=-1.6$, where the posterior mean resorts to the prior); and (iii) the overall fit is not very good.  

Clearly, the fit will improve with more data, but, for efficient and fast learning of controllers, we are particularly interested in good fits from just few data points. Hence, we seek to improve the fitting properties of the GP by exploiting knowledge about the system \eqref{eq:sys} in terms of an approximate linear model.

\begin{figure}[tb]
\centering
\includegraphics[width=0.9\columnwidth]{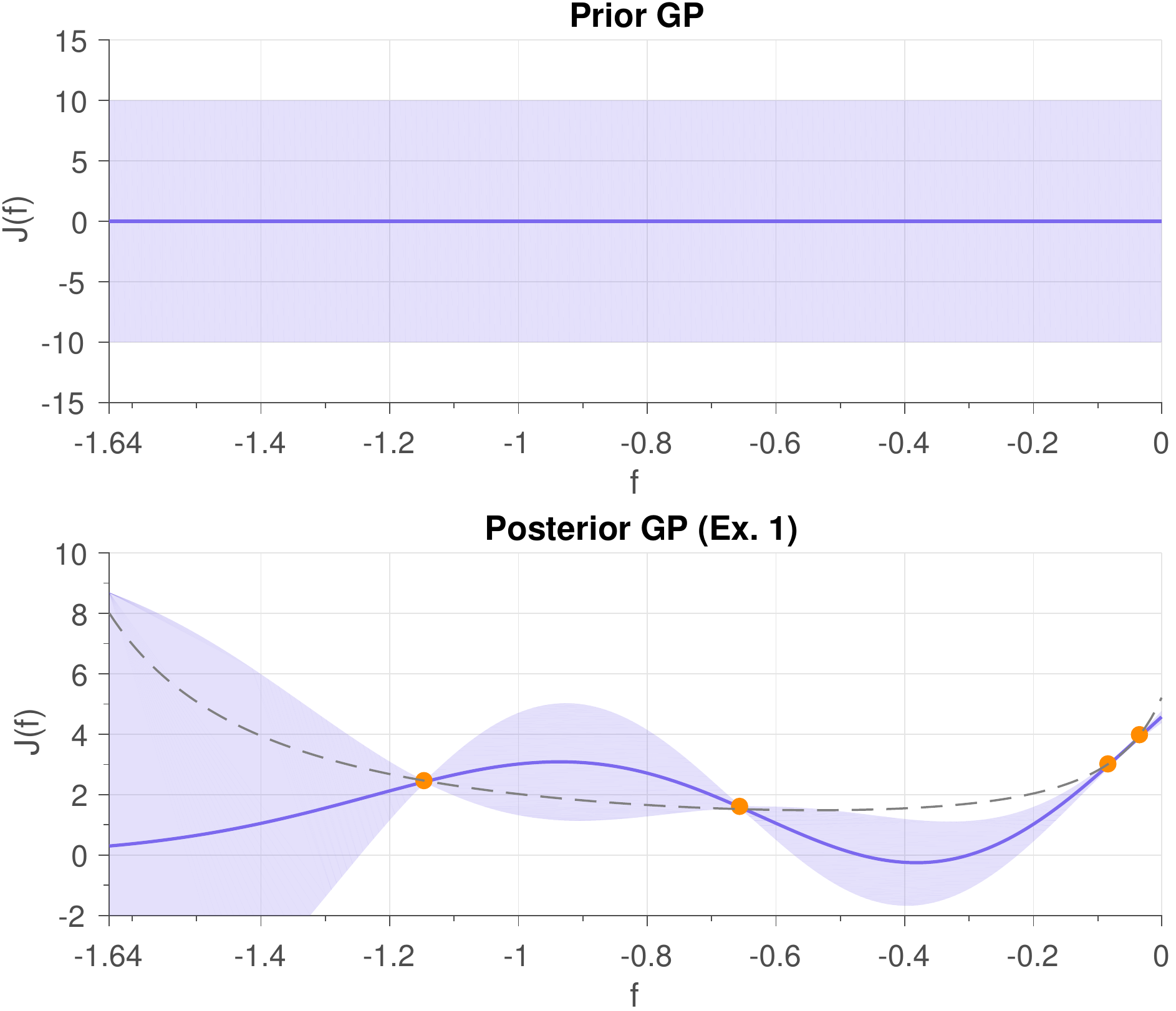}
\caption{Prior and posterior GPs of \Ex \ref{ex:scalarLinear1} using the squared exponential kernel $\kSE$ (hyperparameters: $\sigma_\text{SE}^2 = 25$, $\ell = 0.4$).  The thick line represents the GP mean, and the light blue area the GP variance (+/- two standard deviations).  The true function is shown in the bottom plot in dashed gray, and data points in orange.
}
\label{fig:GPfit_SE}
\end{figure}

\subsection{Incorporating prior knowledge}
In no practical situation, one has a perfect model of the system to be controlled.  At the same time, it is often possible to obtain a rough model, \eg from first principles modeling or some system identification procedure. 
Here, we assume 
 that an uncertain linear model 
\begin{align}
x_{t+1} &= a x_t + b u_t + v_t \label{eq:sys_lin} \\
&a \in [\alow, \aup], b \in [\blow, \bup] \label{eq:sys_lin_uncertainty}
\end{align}
is available as an approximation to \eqref{eq:sys}, \eg from li\-nearization of a first principles model
with (possibly) some uncertainty about the physical parameters.

In the following, we will consider controller gains $f$ such that the system \eqref{eq:sys_lin} is guaranteed stable for all parameters \eqref{eq:sys_lin_uncertainty}.  That is, we consider $f \in \setstable$ with
\begin{align}
\setstable := \{ f \!\in\! \R \, | & \, \abs{a+bf} < 1 \, \nonumber \\
& \forall a \!\in\! [\alow, \aup], b \!\in\! [\blow, \bup] \}.
\label{eq:f_stable_set}
\end{align}
This restriction makes sense, for example, in safety critical applications, where one wants to avoid the risk of trying an unstable controller based on the system knowledge available (\ie \eqref{eq:sys_lin}, \eqref{eq:sys_lin_uncertainty}).  Moreover, the restriction to $\setstable$ will ensure that subsequent calculations are well-defined.

If $a$ and $b$ were known, the functional dependence of the cost $J$ on the controller gain $f$ in \eqref{eq:stateFeedbackControl} for the linear system \eqref{eq:sys_lin} could be derived using standard control theory.
\begin{fact}
Consider the system \eqref{eq:sys_lin} with known parameters $a$ and $b$, and
let $\abs{a+bf}<1$. 
Then, the cost \eqref{eq:cost_scalar} is given by\footnote{In the notation $\phi_{(a,b)}(f)$, we omit the parametric dependence on $v$, $q$ and $r$ since $v$ is a multiplicative constant, which does not play a role in the later optimization, and we assume that $q$ and $r$ are fixed. } 
\begin{equation}
J = v \frac{q + rf^2}{1-(a+bf)^2} =: \phi_{(a,b)}(f).
\label{eq:J_fun_fab}
\end{equation}
\end{fact}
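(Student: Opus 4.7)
The plan is to reduce the infinite-horizon average cost to the stationary second moment of the closed-loop scalar process and then invoke the discrete Lyapunov equation.

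First, I would substitute the state feedback $u_t = f x_t$ into the scalar linear dynamics \eqref{eq:sys_lin} to obtain the closed-loop recursion
\begin{equation}
x_{t+1} = (a+bf)\, x_t + v_t, \qquad v_t \sim \Nc(0,v).
\label{eq:plan_closed_loop}
\end{equation}
Setting $\alpha := a+bf$, the assumption $|\alpha|<1$ makes \eqref{eq:plan_closed_loop} a stable scalar AR(1) process driven by i.i.d. Gaussian noise. Next, I would rewrite the stage cost using $u_t = f x_t$:
\begin{equation}
\E[q x_t^2 + r u_t^2] = (q + r f^2)\, \E[x_t^2],
\label{eq:plan_stage_cost}
\end{equation}
so the entire problem reduces to computing $\lim_{T\to\infty} \tfrac{1}{T}\sum_{t=0}^{T-1}\E[x_t^2]$.

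The second step is to identify this limit with the stationary variance of \eqref{eq:plan_closed_loop}. Squaring \eqref{eq:plan_closed_loop} and taking expectation, using that $v_t$ is zero-mean and independent of $x_t$, yields the one-step recursion $\E[x_{t+1}^2] = \alpha^2\, \E[x_t^2] + v$. Iterating this affine recursion gives the closed form
\begin{equation}
\E[x_t^2] = \alpha^{2t}\, \E[x_0^2] + v\,\frac{1 - \alpha^{2t}}{1 - \alpha^2},
\label{eq:plan_second_moment}
\end{equation}
which, since $|\alpha|<1$, converges exponentially to the stationary value $\sigma_\infty^2 = v/(1-\alpha^2)$ (this is precisely the solution of the discrete Lyapunov equation $\sigma^2 = \alpha^2 \sigma^2 + v$).

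Finally, I would combine \eqref{eq:plan_stage_cost} and \eqref{eq:plan_second_moment}. Because $\E[x_t^2]$ converges to $\sigma_\infty^2$ exponentially, the Cesàro average $\tfrac{1}{T}\sum_{t=0}^{T-1}\E[x_t^2]$ has the same limit $\sigma_\infty^2$, with the transient term contributing only $\mathcal{O}(1/T)$. Multiplying by $(q+rf^2)$ gives
\begin{equation}
J = (q + r f^2)\,\frac{v}{1-(a+bf)^2} = v\,\frac{q + r f^2}{1-(a+bf)^2},
\label{eq:plan_final}
\end{equation}
as claimed. I do not expect a serious obstacle here: the only subtlety is bookkeeping the Cesàro limit against the exponential transient, which is routine once \eqref{eq:plan_second_moment} is in hand. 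The stability assumption $|a+bf|<1$ is exactly what is needed to keep the denominator nonzero and to guarantee convergence of both the second moment and its time average.
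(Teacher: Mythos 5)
Your proposal is correct and follows essentially the same route as the paper: reduce to the closed-loop AR(1) process, identify the stationary second moment $P = v/(1-(a+bf)^2)$ as the solution of the scalar Lyapunov equation $P - P(a+bf)^2 = v$, and multiply by $(q+rf^2)$. The only difference is that you derive the convergence of $\E[x_t^2]$ and the Ces\`aro average explicitly, whereas the paper cites a standard stability theorem for the stationary variance; the substance is identical.
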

\begin{proof}
The controlled process $x_{t+1} = (a+bf) x_t + v_t$ is stable by assumption and thus converges to a stationary process with zero mean and variance $\E[x_t^2] = P$, where $P$ is the unique positive solution to $P - P(a+bf)^2 = v$, \cite[\The~3.1]{AnMo05}.
For stationary $x_t$, ~\cref{eq:cost_scalar} resolves into
\begin{equation*}
J = \E[ q x_t^2 + ru_t^2 ] 
= (q+rf^2) \E[ x_t^2 ] 
= (q+rf^2) P.
\end{equation*}
Equation \eqref{eq:J_fun_fab} then follows.
\end{proof}

If we are uncertain about $a$ and $b$, a collection of possible costs~\cref{eq:J_fun_fab} emerge from all the possible combinations of $a$ and $b$ within their ranges~\cref{eq:sys_lin_uncertainty}.
We assume this collection of costs to be explained by a Gaussian process $J_\text{LQR}$.
While any arbitrary choice of $a$ and $b$ in~\cref{eq:sys_lin} is only an approximation to~\cref{eq:sys}, it yields a cost~\cref{eq:J_fun_fab} that contains useful structural information, which can be leveraged for faster learning controllers from data.
In the next sections, we show how this prior knowledge can be exploited to construct LQR kernels.

\subsection{Parametric LQR kernel}
\label{ssec:par_LQRker}
A reasonable choice for $J_\text{LQR}(f)$ is
\begin{equation}
J_\text{LQR}(f) = w \, \phi_{(\bar{a},\bar{b})}(f), \quad w \sim \Nc(\bar{w},\sigma_w^2)
\label{eq:Jlqr_param}
\end{equation}
where $\bar{a} := (\alow+\aup)/2$ and $\bar{b} := (\blow+\bup)/2$ are the midpoints of the uncertainty intervals \eqref{eq:sys_lin_uncertainty}.

Equation \eqref{eq:Jlqr_param} is a standard parametric model with a single feature $\phi_{(\bar{a},\bar{b})}(f)$ and Gaussian prior.  
It is well-known (see e.g.\ \cite{RaWi06}) that $J_\text{LQR}(f)$ is a GP,
\begin{equation}
J_\text{LQR}(f) \sim \GP\big(0, \kLQR(f,f^\prime) \big)
\label{eq:GP_J_LQR1}
\end{equation}
where we have assumed $\bar{w}=0$, with kernel
\begin{align}
\kLQR(f,f^\prime) &:= \sigma_w^2 \phi_{(\bar{a},\bar{b})}(f)\phi_{(\bar{a},\bar{b})}(f^\prime) \nonumber \\
&= \sigma_\text{p}^2  \frac{  v^2(q + rf^2)(q + rf^{\prime2})}{(1-(\bar{a}+\bar{b}f)^2) (1-(\bar{a}+\bar{b}f^\prime)^2)}
\label{eq:kernel_lqr1}
\end{align} 
and hyperparameters $\sigma_\text{p}:=\sigma_w$, $\bar{a}$, and $\bar{b}$.  We refer to \eqref{eq:kernel_lqr1} as the \emph{parametric LQR kernel} because it captures the cost function for the linear system $(\bar{a}, \bar{b})$ with quadratic cost, and thus the structure of the LQR problem.

To illustrate the performance of the parametric LQR kernel $\kLQR$, we revisit \Ex \ref{ex:scalarLinear1} using this kernel instead of the SE kernel.
The top two graphs of \fig \ref{fig:GPfit_LQR1} show the prior and posterior GP for the same data points as in \fig \ref{fig:GPfit_SE} when using $\kLQR$ with hyperparameters  $\bar{a}=0.9$ and $\bar{b}=1$.  We see that the posterior fit from only four data points is almost perfect and much better than the one in \fig \ref{fig:GPfit_SE}.  This is, of course, not a big surprise because the hyperparameters match the true underlying system of \Ex \ref{ex:scalarLinear1} perfectly.  
The fitting performance deteriorates, however, if the hyperparameters are off.  This can be seen in the bottom graph of \fig \ref{fig:GPfit_LQR1}, which shows the posterior GP with the same hyperparameters, but for the system
\begin{example}%
\label{ex:scalarLinear2}
$x_{t+1} = 0.8 x_t + 0.9u_t + v_t$ with $v_t \sim \Nc(0,1)$.
\end{example}

\begin{figure}[tb]
\centering
\includegraphics[width=.9\columnwidth]{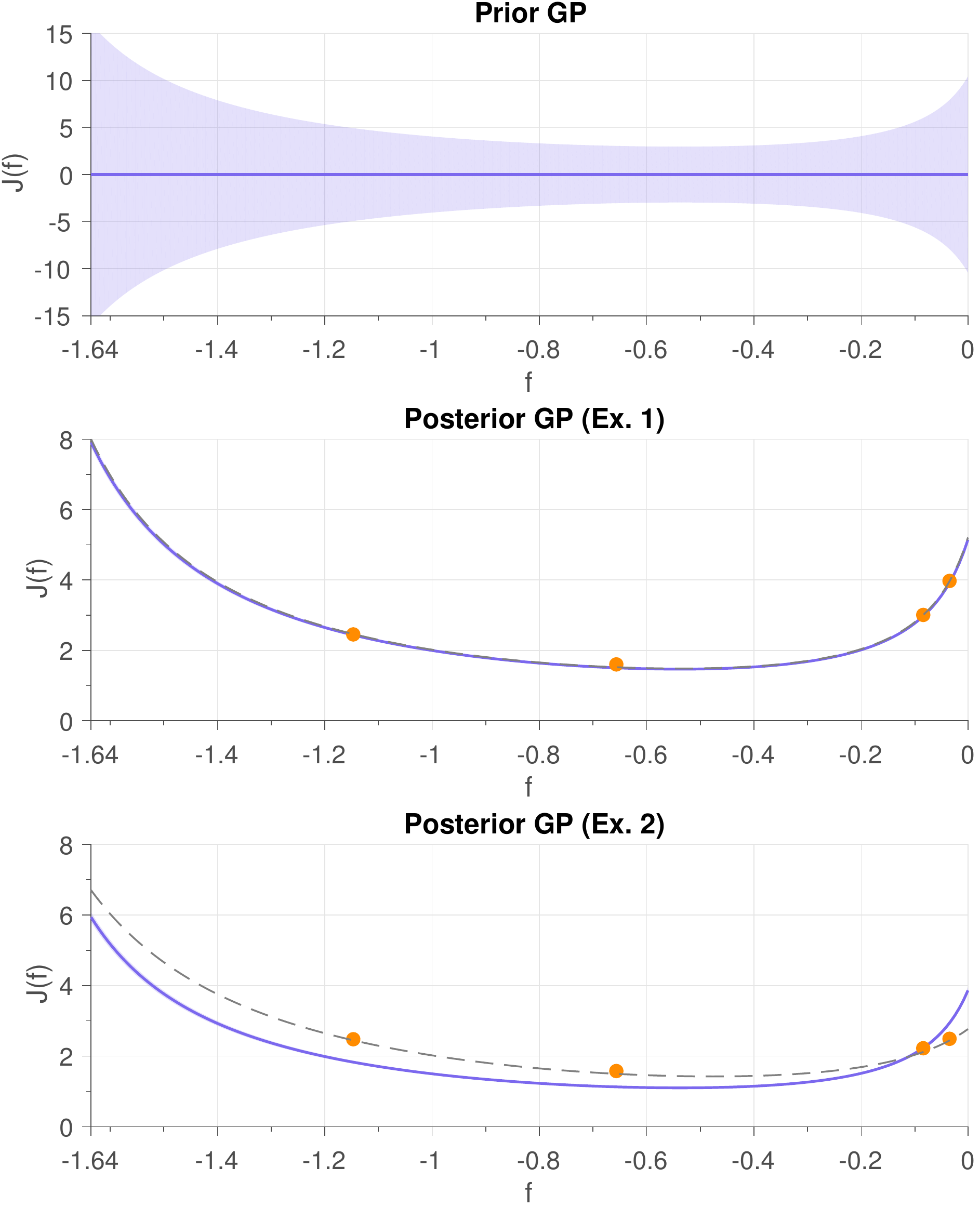}
\caption{GP fit using the parametric LQR kernel $\kLQR$ (hyperparameters: $\sigma_\text{p}^2=1$, $\bar{a}=0.9$, $\bar{b}=1$).  The color code is the same as in \fig \ref{fig:GPfit_SE}.  The hyperparameters are exact for \Ex \ref{ex:scalarLinear1}, while they are off by about 10\% for \Ex \ref{ex:scalarLinear2}.  
}
\label{fig:GPfit_LQR1}
\end{figure}

To improve the fit in this situation, one can employ hyperparameter optimization (see \sect \ref{sec:GPR}) to find improved parameters $\bar{a}$ and $\bar{b}$ that better explain the data.  The simulation results in \sect \ref{sec:simulations} will show that this can be a viable approach.  An alternative is to design more flexible and expressive kernels, which allow for fitting more general models.  This we discuss next.

\subsection{Nonparametric LQR kernel}
\label{ssec:nonpar_LQRker}
The kernel \eqref{eq:kernel_lqr1} captures the structure of the cost function for the LQR problem with one specific linear model $(\bar{a}, \bar{b})$.  
A straightforward way to increase the flexibility of the kernel in order to fit more general problems is to use 
$\nfeat$ basis functions (or features) $\phi_{(a_i, b_i)}$ corresponding to $\nfeat$ models $(a_1, b_1), (a_2, b_2), \dots, (a_\nfeat, b_\nfeat)$,
\begin{align}
J_\text{LQR}(f) &=  
\underbrace{
\big[
\phi_{(a_1, b_1)}(f) \,\,\,
\phi_{(a_2, b_2)}(f) \, \cdots \,
\phi_{(a_m, b_m)}(f) 
\big]
}_{=: \Phi^\transp(f)}
w \nonumber \\
&= \Phi^\transp(f) w
\label{eq:Jlqr_param_multi}
\end{align}
with $w \in \R^\nfeat$, $w \sim \Nc(\bar{w}, \Sigma_w)$.
The derivation of the corresponding kernel is analogous to \eqref{eq:kernel_lqr1} (see \cite[Sec.~2.7]{RaWi06}) and yields
\begin{equation}
\kLQRmulti(f,f^\prime) = \Phi^\transp\!(f) \, \Sigma_w \, \Phi(f^\prime) .
\label{eq:kernel_lqr1_multiFeatures}
\end{equation}

Same as  \eqref{eq:Jlqr_param}, the model \eqref{eq:Jlqr_param_multi} represents a parametric model for the LQR cost $J_\text{LQR}$.  That is, its flexibility is essentially limited to the number of explicit features $\phi_{(a_i, b_i)}$.
Employing powerful kernel techniques \cite{SmSc02}, 
the parametric model can be turned into a nonparametric one, which includes an infinite number of features while retaining finite computational complexity.  
The key idea is to consider the kernel \eqref{eq:kernel_lqr1_multiFeatures} in the limit 
of infinitely many features corresponding to models $a \in [\alow, \aup]$ and $b \in [\blow, \bup]$.  The derivation follows ideas similar to how the standard SE kernel can be derived from basic features, \cite[p.~84]{RaWi06}.

Consider the partitions of $[\alow, \aup]$ and $[\blow, \bup]$ into $\nfeat$ equidistant intervals, and let $\{a_i\}_{1:\nfeat}$ and $\{b_i\}_{1:\nfeat}$ be the lower (or upper) interval limits.  Consider the model \eqref{eq:Jlqr_param_multi} with feature vector 
\begin{equation*}
\Phi^\transp(f) = 
\begin{bmatrix}
\phi_{(a_1, b_1)}(f)  \cdots  \phi_{(a_i, b_j)}(f)  \cdots  \phi_{(a_m, b_m)}(f)
\end{bmatrix}
\end{equation*}
which includes all combinations $\phi_{(a_i, b_j)}$ for $i,j \in \{1,\dots,\nfeat\}$, and the parametric prior $\bar{w} = 0$ and $\Sigma_w = \frac{\sigma_\text{n}^2 (\aup-\alow)(\bup-\blow)}{\nfeat^2} I$ for some $\sigma_\text{n} \in \R$.  The kernel \eqref{eq:kernel_lqr1_multiFeatures} then becomes
\begin{align}
\kLQRmulti(f,f^\prime) &= \frac{\sigma_\text{n}^2 (\aup-\alow)(\bup-\blow)}{\nfeat^2} \nonumber \\
&\phantom{=}\times  \sum_{j=1}^{\nfeat} \sum_{i=1}^{\nfeat} \phi_{(a_i, b_j)}(f) \, \phi_{(a_i, b_j)}(f^\prime).
\label{eq:derivation_lqr_finite_sum}
\end{align} 
Since $\phi_{(a_i, b_j)}$ is continuous on $\setstable$, the finite sum \eqref{eq:derivation_lqr_finite_sum} converges to the Riemann integral in the  limits as $\nfeat \to \infty$.  We can thus define the \emph{nonparametric LQR kernel} 
\begin{align}
\kLQRtwo(f,f^\prime) &= \lim_{\nfeat \to \infty} \kLQRmulti(f,f^\prime) \nonumber \\
&= \sigma_\text{n}^2 \!\! \int_{\blow}^{\bup} \!\! \int_{\alow}^{\aup} \!\! \phi_{(a, b)}(f)  \phi_{(a, b)}(f^\prime) \, da \, db 
\label{eq:kernel_lqr2}
\end{align}
for $f,f^\prime \in \setstable$ with
the signal variance $\sigma_\text{n}^2$ and the integration boundaries $\alow$, $\aup$, $\blow$, and $\bup$ as hyperparameters.
While the kernel in \eqref{eq:kernel_lqr2} represents the structure of the cost function \eqref{eq:J_fun_fab} for an infinite number of models (all $a \in [\alow, \aup]$ and $b \in [\blow, \bup]$), its computation is finite consisting of solving the integral in \eqref{eq:kernel_lqr2}.  By contrast,  the computational complexity of the parametric kernels \eqref{eq:kernel_lqr1_multiFeatures} and \eqref{eq:derivation_lqr_finite_sum} grows with the number of features $\nfeat$.  We prove that the kernel \eqref{eq:kernel_lqr2} is indeed a valid covariance function. 
\begin{proposition}
$\kLQRtwo(f,f^\prime)$ is positive semidefinite for all $f, f^\prime \in \setstable$. 
\end{proposition}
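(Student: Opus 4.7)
The plan is to verify the definition of positive semidefiniteness directly. Take any finite collection $\{f_1,\dots,f_N\} \subset \setstable$ and any coefficients $c_1,\dots,c_N \in \R$, and show that the quadratic form $\sum_{i,j} c_i c_j \kLQRtwo(f_i,f_j)$ is nonnegative. Substituting the integral representation \eqref{eq:kernel_lqr2} and interchanging the finite double sum with the double integral (justified below), one obtains
\begin{equation*}
\sum_{i,j} c_i c_j \kLQRtwo(f_i,f_j) = \sigma_\text{n}^2 \int_{\blow}^{\bup}\!\!\int_{\alow}^{\aup} \Big(\sum_{i=1}^{N} c_i \phi_{(a,b)}(f_i)\Big)^{\!2} da\, db,
\end{equation*}
since the integrand factors as a perfect square. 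Because the integrand is pointwise nonnegative and $\sigma_\text{n}^2 \geq 0$, the right-hand side is nonnegative, which establishes the claim.

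To justify the interchange of sum and integral, I would note that for any $f \in \setstable$ the denominator $1-(a+bf)^2$ is by definition strictly positive for every $(a,b)$ in the compact rectangle $[\alow,\aup]\times[\blow,\bup]$, so by continuity it is bounded below by a positive constant on this rectangle. Hence $\phi_{(a,b)}(f_i)$ from \eqref{eq:J_fun_fab} is continuous and bounded in $(a,b)$ for each $f_i \in \setstable$, the integrand of \eqref{eq:kernel_lqr2} is therefore bounded on a bounded domain, and Fubini's theorem together with finiteness of the outer sums allows the exchange of summation and integration.

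An essentially equivalent alternative route would be to invoke the fact that each finite-feature kernel $\kLQRmulti$ in \eqref{eq:kernel_lqr1_multiFeatures} is positive semidefinite (it is a Gram-type kernel $\Phi^\transp(f)\Sigma_w\Phi(f')$ with $\Sigma_w \succeq 0$), and that pointwise limits of positive semidefinite kernels are again positive semidefinite. Then \eqref{eq:kernel_lqr2} inherits the property from the Riemann-sum approximants \eqref{eq:derivation_lqr_finite_sum}. The direct square-integral argument is cleaner, however, so I would present that as the main proof and mention the limit interpretation only as a remark.

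The only subtle point is the integrability/continuity check at the boundary of $\setstable$; since the statement is restricted to $f,f' \in \setstable$ and the rectangle of integration is compact, the stability condition $|a+bf|<1$ holds uniformly over $(a,b)$, so there is no singularity in the integrand and the argument goes through without further technicalities.
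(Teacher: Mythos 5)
Your proof is correct and follows essentially the same route as the paper: expand the quadratic form, exchange the finite sums with the double integral, and observe that the integrand becomes a perfect square, hence the form is nonnegative. The paper performs the interchange without comment, whereas you additionally justify it via compactness of the integration rectangle and the uniform positivity of $1-(a+bf)^2$ for $f \in \setstable$; this is a harmless (and welcome) refinement, not a different argument.
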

\begin{proof}
Take any collection $\{\theta_1, \theta_2, \dots, \theta_N\}$ and any $z \in \R^N$. Let $K_N$ be the Gram matrix for the kernel $\kLQRtwo$. Then
\begin{align*}
z^\transp K_N z
&= \sum_{j=1}^N z_j \Big( \sum_{i=1}^N z_i \kLQRtwo(\theta_i, \theta_j) \Big) \\
&= \sigma_\text{n}^2 \sum_{j=1}^N z_j  \sum_{i=1}^N z_i \! \!   \int\limits_{\blow}^{\bup}  \int\limits_{\alow}^{\aup} \!\! \phi_{(a, b)}(\theta_i)  \phi_{(a, b)}(\theta_j) \, da \, db \\
&= \sigma_\text{n}^2 \int\limits_{\blow}^{\bup}  \int\limits_{\alow}^{\aup}  \sum_{j=1}^N z_j \phi_{(a, b)}(\theta_j) \sum_{i=1}^N z_i \phi_{(a, b)}(\theta_i) \, da \, db  \\
&= \sigma_\text{n}^2 \int\limits_{\blow}^{\bup}  \int\limits_{\alow}^{\aup} \Big(\sum_{i=1}^N z_i \phi_{(a, b)}(\theta_i) \Big)^2 \, da \, db \geq 0
\end{align*}
which completes the proof.
\end{proof}

The above derivation corresponds to the \emph{kernel trick} \cite{RaWi06,SmSc02}, 
which is a core idea of kernel methods and behind many powerful learning algorithms. In essence, the kernel trick means to write a learning algorithm solely in terms of inner products of features and replacing those by a kernel\footnote{All computations for the GP regression, \ie equations \eqref{eq:GP_post_mean}, \eqref{eq:GP_post_var}, and \eqref{eq:GP_marginalLikeli}, are expressed solely in terms of the kernel $k$ (and the mean $\mu$).}.
In particular, this allows for considering an infinite number of features, while retaining finite computation.  

Figure \ref{fig:GPfit_LQR2} shows the prior and posterior GP for the nonparametric LQR kernel \eqref{eq:kernel_lqr2}.  Because the kernel is more flexible than \eqref{eq:kernel_lqr1},
it can fit the cost functions for the two different models of \Ex \ref{ex:scalarLinear1} and \Ex \ref{ex:scalarLinear2} well.

\begin{figure}[tb]
\centering
\includegraphics[width=.9\columnwidth]{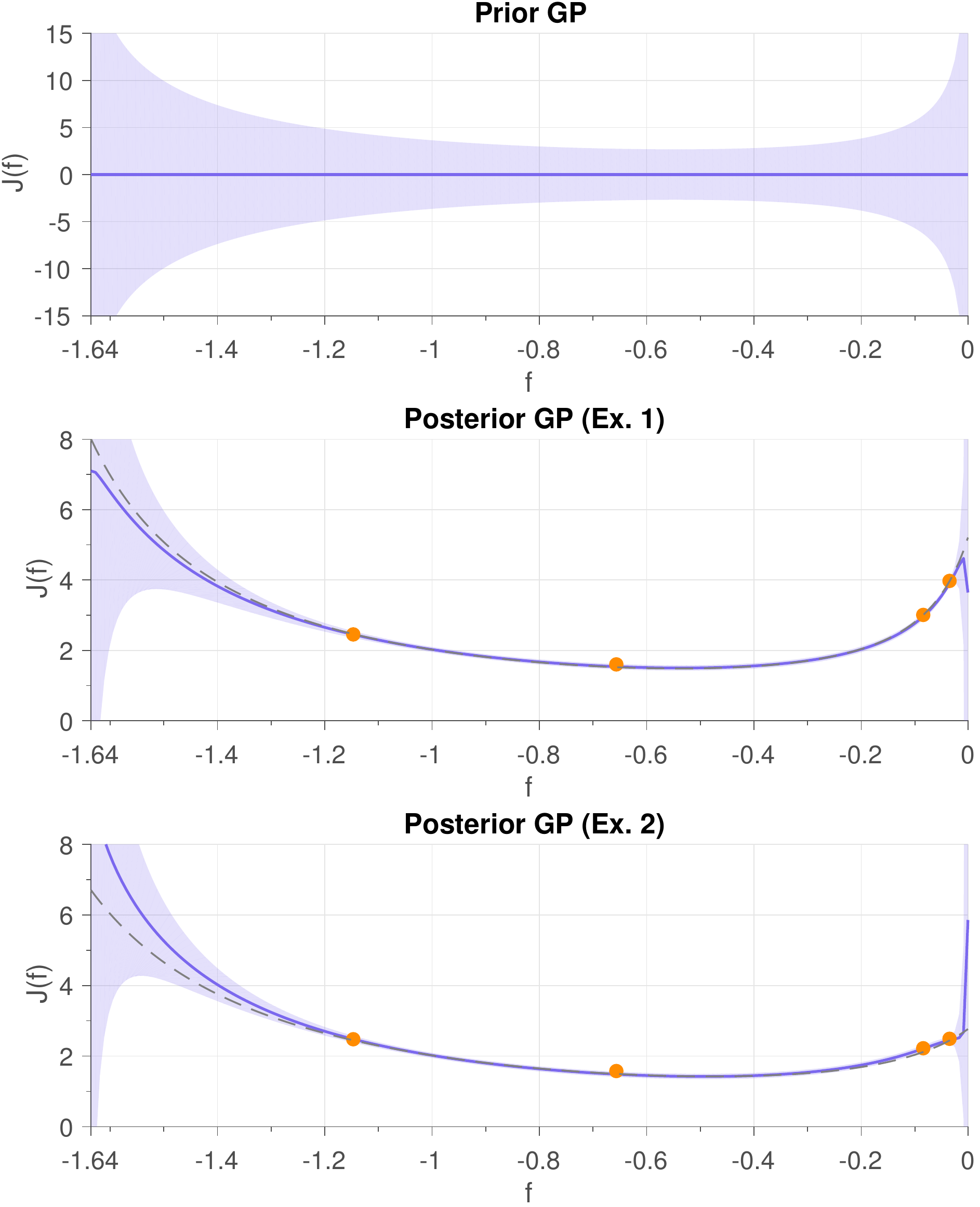}
\caption{GP fit using the nonparametric LQR kernel $\kLQRtwo$ (hyperparameters: $\sigma_\text{n}^2 = 20$, $\alow=0.8$, $\aup=1.0$, $\blow=0.9$, and $\bup=1.1$).  Colors are the same as in \fig \ref{fig:GPfit_SE}. Both examples are fitted well.
}
\label{fig:GPfit_LQR2}
\end{figure}

\subsection{A combined kernel}
System \eqref{eq:sys_lin} is an approximation to the true system \eqref{eq:sys}. It is thus not desirable to fully commit to the model for solving the optimal control problem. This would mean to directly minimize \eqref{eq:J_fun_fab} (which would result in the well-known LQR solution).  On the other hand, the linear problem contains information about the structure of the optimization problem, which shall be useful also for optimization of the true nonlinear system \eqref{eq:sys}, as long as we believe \eqref{eq:sys_lin} to be a reasonable approximation thereof.  In other words, 
we can expect the true cost function $J$ for the nonlinear problem to bear some similarity to \eqref{eq:J_fun_fab}.
We model the cost function \eqref{eq:cost} for the nonlinear system \eqref{eq:sys} as 
being composed of a part that stems from the approximation as LQR problem and an error term,
\begin{equation}
J(f) = J_\text{LQR}(f) + J_\Delta(f).
\label{eq:J_decomposition}
\end{equation}
The term $J_\Delta(f)$ captures the error in the model that stems from the fact that the true problem is nonlinear.  We model it as a standard GP, \eg employing the SE kernel \eqref{eq:kernel_SE}: $J_\Delta(f) \sim \GP(0,\kSE(f,f^\prime))$. 
We can model $J_\text{LQR}(f)$ as a GP~\cref{eq:GP_J_LQR1} using either the parametric LQR kernel~\cref{eq:kernel_lqr1} or the nonparametric~\cref{eq:kernel_lqr2} LQR kernel. Then, since the sum of two independent Gaussians is also Gaussian, it follows from \eqref{eq:J_decomposition} that also 
$J$ is a GP (see \cite[Sec.~2.7]{RaWi06}), with  
\begin{equation*}
J(f) \sim \GP\big(0, k_\text{LQR}(f,f^\prime) +  \kSE(f,f^\prime) \big).
\label{eq:GP_J_1}
\end{equation*}
where $k_\text{LQR}$ can be replaced by~\cref{eq:kernel_lqr1} or~\cref{eq:kernel_lqr2}.
By choosing the hyperparameters of the kernels, the designer can express how much he or she trusts the LQR versus the SE model.  For example, $\sigma_\text{SE} = 0$ means to fully rely on the LQR kernel.  %
\section{Simulations} 	
\label{sec:simulations}

In this section, 
we show statistical comparisons of the LQR kernels proposed in \cref{sec:lqrkernel} against the commonly used SE kernel, in two different settings.
In the first setting, we evaluate the performance of each kernel in the context of GP regression. Specifically, we quantify the mismatch between the GP posterior mean, computed from a set of random evaluations, and the underlying cost function.
In the second setting, we evaluate each kernel in the context of BO by comparing the learned minimum to the true global minimum.

The GP regression and BO experiments are presented in \cref{ssec:GPfit} and \cref{sec:BO_lin}, respectively, considering a linear system \eqref{eq:sys}. In addition, we also evaluate the BO setting for a nonlinear system in \cref{sec:BO_nonlin}.

\subsection{Experimental choices}
\label{ssec:exp_choices}
For the simulations in~\cref{ssec:GPfit} and \cref{sec:BO_lin}, we consider the true system~\cref{eq:sys} to be linear (i.e., \cref{eq:sys_lin}) with uncertain parameters 
$a \in \left[ 0.8, 1.0 \right]$ and $b \in \left[ 0.9, 1.1 \right]$.
We consider the optimal control problem as in \sect \ref{sec:problem} with $q=r=v=1$.  Feedback controllers \cref{eq:stateFeedbackControl} are considered to be in the range \eqref{eq:f_stable_set}, $\domf = \left[ -1.64, -0.001 \right]$.

For each controller $f$, the corresponding infinite-horizon LQR cost $J(f)$ is given by \cref{eq:J_fun_fab}.
In practice, only finite-horizon simulations can be realized. Therefore, the outcome of an experiment is noisy, as modeled in \eqref{eq:J_evaluation}, with $\sigma=0.05$.
In each simulation, a different linear model {\ab } is obtained by uniformly sampling the ranges above, which yields a different underlying cost function~\cref{eq:J_fun_fab}.
Each simulation is repeated for four different kernels:
\begin{itemize}
	\item \textbf{SE kernel}: As described in \cref{eq:kernel_SE}.
	\item \textbf{LQR kernel I}: Parametric LQR kernel \cref{eq:kernel_lqr1} with fixed parameters $(\bar{a}, \bar{b})$ (midpoints of uncertainty intervals).
	\item \textbf{LQR kernel II}: Parametric LQR kernel \cref{eq:kernel_lqr1}, with $(\bar{a}, \bar{b})$ optimized from evaluations by maximizing \cref{eq:GP_marginalLikeli}.
	\item \textbf{LQR kernel III}: Nonparametric LQR kernel \cref{eq:kernel_lqr2}.
\end{itemize}

The parametric LQR kernel \cref{eq:kernel_lqr1} is constructed taking the middle points of the uncertainty intervals, i.e., $\bar{a}=0.9$ and $\bar{b}=1$.
For the nonparametric LQR kernel \cref{eq:kernel_lqr2}, the intervals serve as integration domains.
The length scale of the SE kernel is computed as one fifth of the input domain, i.e., $\ell = 0.327$.
The signal variance of the parametric and nonparametric LQR kernel, i.e., $\sigma_\text{p}^2$ and $\sigma_\text{n}^2$, are normalized such that $\kLQR(\bar{f},\bar{f})=\kLQRtwo(\bar{f},\bar{f})=10$, where $\bar{f} = -0.82$ is the midpoint of $\domf$. Since the variance of these two kernels grow fast toward the corners of the domain, we set for the SE kernel $\sigma_\text{SE}^2=50$, for a fair comparison.

\subsection{GP regression setting}
\label{ssec:GPfit}
For this statistical comparison, we run 1000 simulations. In each simulation, we compute the GP posterior conditioned on two evaluations randomly sampled from the underlying cost function. We assess the quality of the regression by computing the root mean squared error (RMSE) between the true cost function and the GP posterior mean, both computed on a grid of 100 points over $\domf$.

\begin{figure}[tb]
\centering
\includegraphics[width=1\columnwidth]{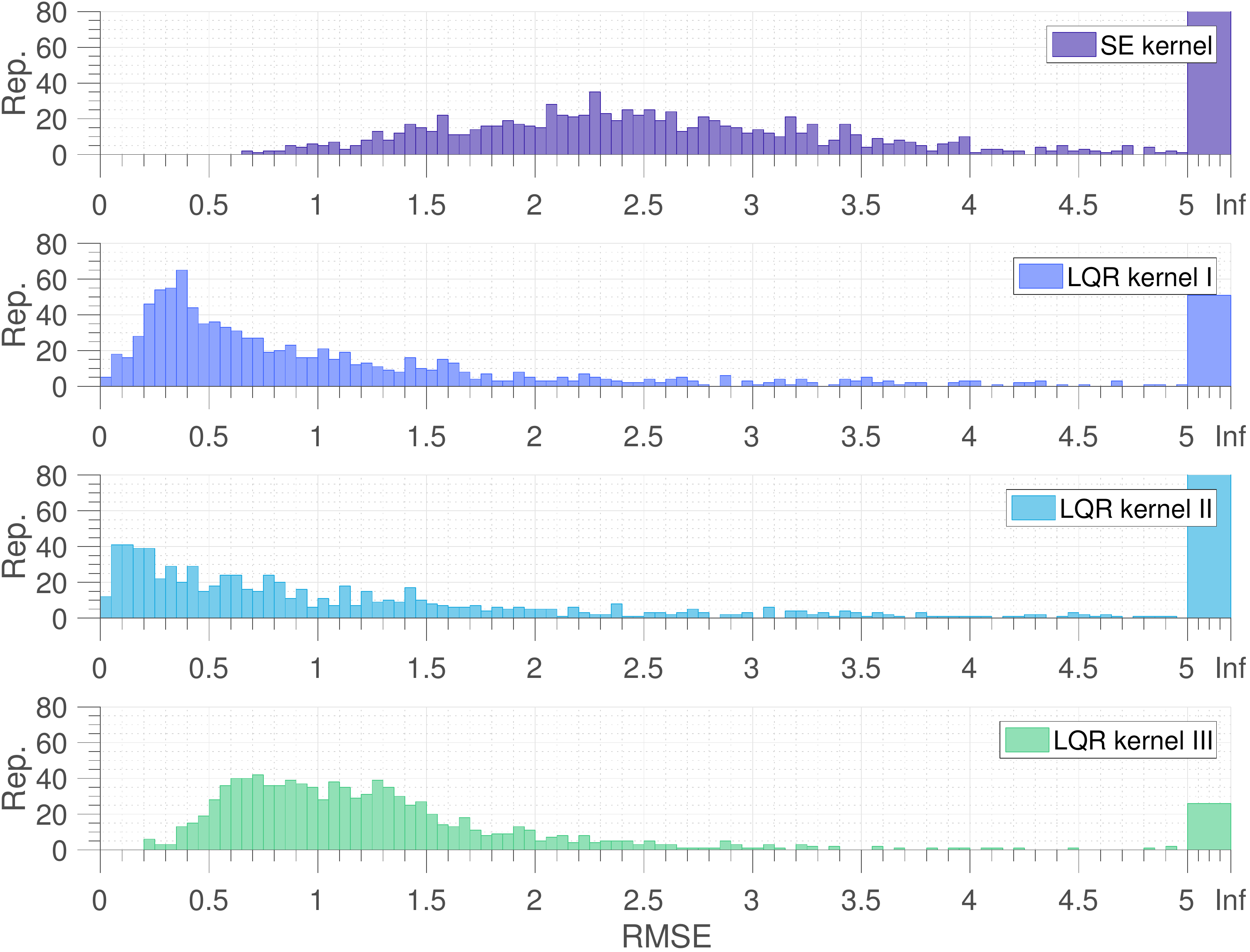}
\caption{Histograms of the root mean squared error (RMSE) between the true cost function and the GP posterior mean conditioned on two evaluations. The histograms are computed out of 1000 simulations.}
\label{fig:sec5_hist_fit_Neval2}
\end{figure}

\cref{fig:sec5_hist_fit_Neval2} shows the histograms of the RMSE obtained with the different kernels.
The LQR kernels clearly outperform the SE kernel in these experiments because they contain structural knowledge about the true cost, which contributes to a better GP fit, even with only two data points.
The nonparametric kernel makes good predictions because it inherently contains information about all possible functions within their uncertainty ranges of \ab. However, poorly specified integration bounds will decrease its performance. The RMSE statistical analysis confirms this when the integration intervals on {\ab } are a 50$\%$ larger than their uncertainties.
The parametric kernel with fixed hyperparameters ($\bar{a}, \bar{b}$) has a significant number of outliers since, in many cases, the data is queried from a cost function whose sampled {\ab } are far away from the ones of the kernel. The parametric kernel with hyperparameter optimization also leads to a better fit than SE, but has most outliers (about $75 \%$ of the simulations) since hyperparameter optimization is not reliable with just two data points.

\begin{table}[b]
	\setlength{\tabcolsep}{4pt}
	\centering
	\caption{RMSE averaged over 1000 simulations.}
	\label{tab:RMSE}
	\begin{tabular}{c|ccccc}
    $N$ & SE ker. & SE ker. (*) & LQR ker. I & LQR ker. II & LQR ker. III \\
		\hline
		1  & 2.76 (0.74) & 2.39 (0.81) & \textbf{1.01} (0.95) & 1.98 (1.29) & 1.31 (0.71) \\
		2  & 2.49 (0.85) & 2.42 (0.93) & \textbf{1.02} (0.96) & 1.09 (1.07) & 1.22 (0.67) \\
		5  & 1.83 (1.02) & 1.31 (0.95) & 1.10 (1.05) & \textbf{0.45} (0.70) & 1.20 (0.76) \\
		10 & 1.13 (0.99) & 0.98 (0.86) & 0.98 (0.96) & \textbf{0.20} (0.46) & 1.11 (0.74) \\
	\end{tabular}
\end{table}

We have repeated these experiments with $N=1$, $5$, and $10$ evaluations. \cref{tab:RMSE} shows the averaged RMSE over 1000 simulations for each kernel, and the corresponding standard deviation (in parentheses). The the outliers (i.e., any RMSE above 5) were excluded from these computations. 
In general, we see that the LQR kernel optimized from data performs better than the others for more than 2 evaluations.
For a fair comparison, we also include the SE kernel with hyperparameter optimization in the table (marked with an asterisk).
Because it performs similar to the SE kernel, and it does not improve upon the LQR kernels, we leave it out of the discussion for the rest of the paper.

\emph{Remark}:
The hyperparameters {\ab } of the parametric LQR kernel are optimized from data by maximizing the marginal likelihood \cref{eq:GP_marginalLikeli}. In a sense, optimizing the hyperparameters {\ab } of the LQR kernel from data can be considered similar to doing system identification on the linear system {\ab } using \cref{eq:GP_marginalLikeli} as performance metric. 
\subsection{BO setting}
\label{sec:BO_lin}
In this section, we evaluate the performance of each kernel in the context of BO. 
For each BO run, the first evaluation is decided randomly within the range of controllers $\domf$. Subsequent evaluations are acquired using the expected improvement (EI) method \cref{eq:BO_nextEval}, \cref{eq:EI}. We stop the exploration after three evaluations and compute the instantaneous regret (i.e., the absolute error between the true minimum and the minimum of the GP posterior mean) as the outcome of each experiment. 

\begin{figure}[tb]
\centering
\includegraphics[width=1\columnwidth]{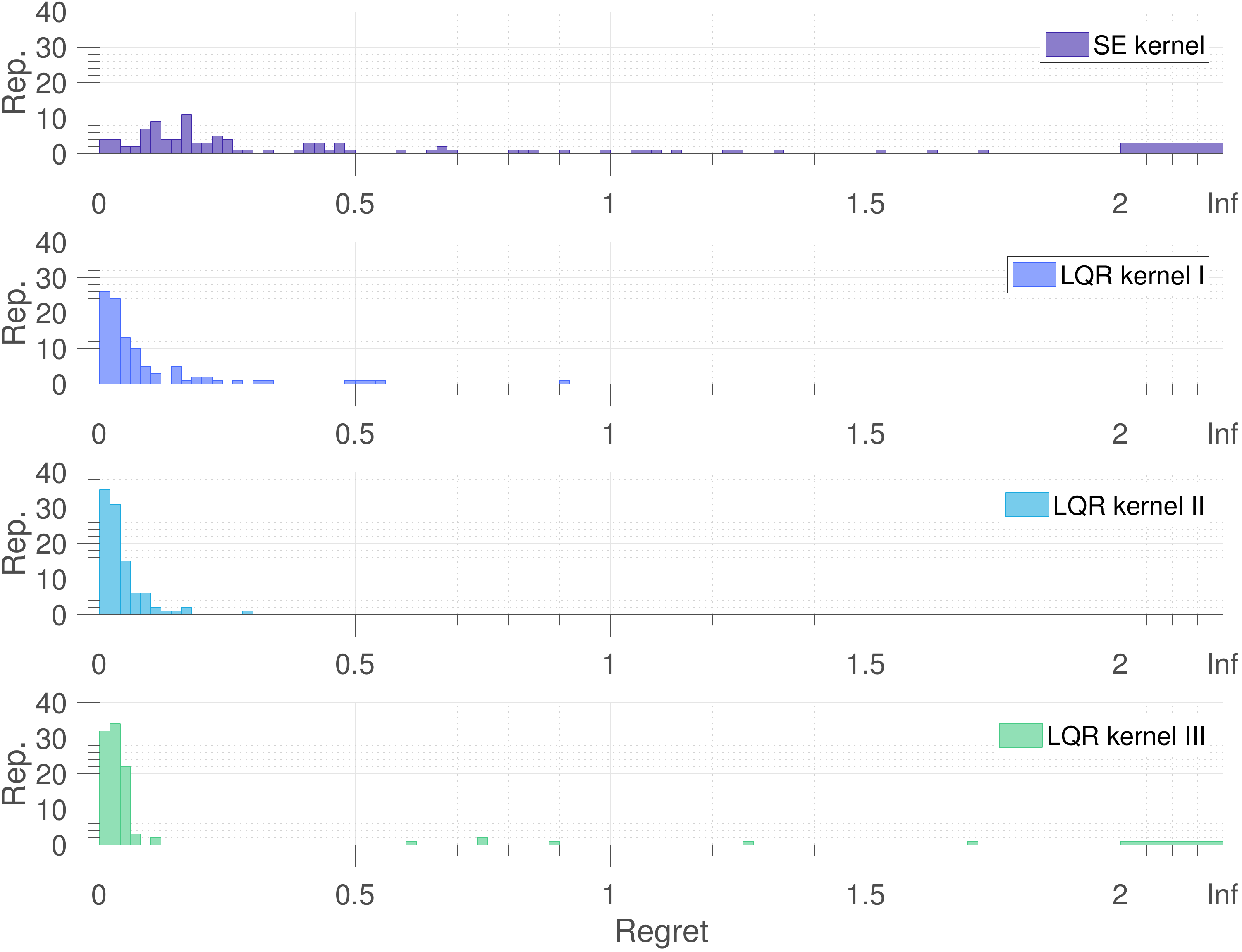}
\caption{Histogram of the regret incurred by stopping BO after three evaluations. The system considered is the linear system \cref{eq:sys_lin}. The histogram is computed over 100 BO runs.}
\label{fig:sec5_hist_BO_Neval2}
\end{figure}

\cref{fig:sec5_hist_BO_Neval2} shows the histogram of the regret for each kernel over 100 BO runs. The LQR kernels consistently outperform the SE kernel. The nonparametric kernel shows some outliers because in some cases the GP posterior mean grows large toward negative values, and so does its minimum. This is a wrong prediction of the underlying cost, defined positive. 
However, this minor issue can easily be detected, 
and the optimization procedure continued with a randomly sampled controller.
\subsection{BO setting for a nonlinear system}
\label{sec:BO_nonlin}
In this section, we use the same BO setting as in \cref{sec:BO_lin}, but consider now a nonlinear system \eqref{eq:sys}, namely
\begin{equation}
	x_{t+1} = \tilde{a}\sin\left( x_t \right) + \tilde{b} u_t + v_t
	\label{eq:sys_nonlinear}
\end{equation}
with $v_t \sim \Nc(0,1)$ and uncertain parameters $\tilde{a} \in \left[ 0.9, 1.1 \right]$, $\tilde{b} \in \left[ 0.9, 1.1 \right]$.
We control this system from $x_0=1$ to zero, using the same controller structure as the one described in \cref{ssec:exp_choices}. In this case, the considered range of controllers is $\domf = \left[ -1.57, -0.27 \right]$, which corresponds to~\cref{eq:f_stable_set}, reduced by a $20\%$. 
The LQR kernels are built up using the linearized version of \eqref{eq:sys_nonlinear} around the zero equilibrium point, \ie $x_{t+1} = \tilde{a} x_t + \tilde{b} u_t + v_t$.
\cref{tab:Regret_nonlin} shows the regret average and standard deviation. As can be seen, the LQR kernels perform better than the SE kernel.

\begin{table}[t]
	\centering
	\caption{Regret averaged over 100 BO runs for a nonlinear system}
	\label{tab:Regret_nonlin}
	\begin{tabular}{c|cccc}
		$N$ & SE kernel & LQR kernel I & LQR kernel II & LQR kernel III \\
		\hline
		2 & 1.34 (0.33) & \textbf{0.30} (0.21) & 0.32 (0.20) & 0.35 (0.18) \\
		3 & 0.49 (0.39) & 0.33 (0.38) & \textbf{0.31} (0.19) & 0.32 (0.18) \\
		4 & 0.35 (0.27) & 0.36 (0.44) & 0.32 (0.18) & \textbf{0.32} (0.17) \\
		5 & 0.36 (0.21) & \textbf{0.31} (0.36) & 0.32 (0.20) & 0.32 (0.18) \\
	\end{tabular}
\end{table}

\section{Concluding Remarks} 	
\label{sec:conclusion}
In this paper, we discussed how prior knowledge about the structure of an optimal control problem can be leveraged for data-efficient learning control.  Specifically, for a nonlinear quadratic optimal control problem, we showed how an uncertain linear model approximating the true nonlinear dynamics can be exploited in a Bayesian setting.  This led to the proposal of two novel kernels, a parametric and a nonparametric version of an LQR kernel, which incorporate the structure of the well-known LQR problem as prior knowledge.  Numerical simulations herein demonstrate improved data efficiency over standard kernels, \ie good controllers are learned from fewer experiments.  
We hope that the discussion and analysis herein also motivate further development of kernels tailored for other learning control problems.

Approaching the nonlinear quadratic optimal control problem presented herein with pure model-based methods can lead to 
superior performance for very accurate models compared to the proposed data-based approach.
However, this paper shares the motivation of \cite{MaHeBoScTr16}, which proposes a data-based approach when only poor models are available, and extends it by incorporating the LQR structure into the kernel.

The results herein are preliminary in the sense that the LQR kernels are developed for a first-order system.  While this is the natural first step, future work will concern the extension of the ideas and derivations to multivariate systems in order to develop this into a powerful framework for learning control in practice.  
Moreover, we plan to validate the benefit of the new kernels in experiments on more realistic nonlinear examples or physical hardware such as those in \cite{MaHeBoScTr16} and \cite{MaBeHeScKrScTr17}, settings that pose challenging issues, like imperfect states measurement, among others.

\bibliographystyle{IEEEtran}
\bibliography{LqrKernel-nocomm}

\end{document}